\newtheorem{theorem}{Theorem}
\newtheorem{example}[theorem]{Example}
\newtheorem{lemma}[theorem]{Lemma}
\renewenvironment{proof}[1][Proof]{\noindent\textbf{#1.} }{\
\rule{0.5em}{0.5em}}
\numberwithin{equation}{section}
\begin{document}

\title{Don't stay local - extrapolation analytics for Dupire's local
volatility}
\author{Peter Friz\footnote{
  partially supported by MATHEON
}
, Stefan Gerhold \\
TU\ and WIAS Berlin, TU Wien}
\maketitle

\begin{abstract}
A robust implementation of a Dupire type local volatility model is an
important issue for every option trading floor. Typically, this (inverse)
problem is solved in a two step procedure : (i)\ a smooth parametrization of
the implied volatility surface; (ii) computation of the local volatility
based on the resulting call price surface. Point (i), and in particular how
to extrapolate the implied volatility in extreme strike regimes not seen in
the market, has been the subject of numerous articles, starting with Lee
(Math.\ Finance, 2004). In the present paper we give direct analytic
insights into the asymptotic behavior of local volatility at extreme strikes.
\end{abstract}

\section{A new formula for local volatility extrapolation}
\label{se:intro}

Volatility remains a key concept in modern quantitative finance. In
particular, the Black-Scholes \textit{implied volatility} surface $\sigma_{%
\mathrm{BS}}=\sigma_{\mathrm{BS}}(K,T)$ is the central object of any option
trading desk, see e.g.~\cite{Ga06}. On the quantitative and computational
side, a smooth and arbitrage free parametrization of the implied volatility
surface is a crucial step towards a robust implementation of a Dupire type
local volatility model~\cite{Du94, Du96}. Indeed, Dupire's formula%
\begin{equation}  \label{eq:dupire}
\sigma _{\mathrm{loc}}^{2}( K,T) =\frac{2\partial _{T}C}{K^{2}\partial _{KK}C%
}
\end{equation}
implies that any arbitrage free call price surface 
\begin{equation*}
C=C( K,T) =  C_{\mathrm{BS}}( K,T;\sigma _{\mathrm{BS}}( K,T) ) 
\end{equation*}
which arises from a (not necessarily Markovian) It\^{o} diffusion is
obtained from the one-factor (``Dupire's local vol'') model%
\begin{equation*}
dS_{t}/S_{t}=\sigma _{\mathrm{loc}}( S_{t},t) dW_{t}.
\end{equation*}%
(Note that spot remains fixed in the present discussion, and that we work
under the appropriate forward measure to avoid drift terms.) It is helpful
to think of local volatility as a Markovian projection, term coined in \cite%
{Pi06}, of a higher dimensional model (e.g.\ Heston); the first component
then forms an It\^{o} diffusion of the form 
\begin{equation*}
dS_{t}/S_{t}=\sigma _{\mathrm{stoch }}( t,\omega ) dW_{t}.
\end{equation*}%
Indeed, it is known (see e.g.~\cite{Ga06} and the references therein) that $%
\sigma _{\mathrm{loc}}^{2}( K,T) =\mathbb{E}[ \sigma _{\text{stoch }%
}^{2}|S_{T}=K] $; in practice, this means that even for stochastic
volatility models with fully explicit Markovian specification, sampling from
the corresponding local volatility models requires substantial computational
effort. (In particular, the singular conditioning requires Malliavin
calculus techniques, as was pointed out e.g.\ in~\cite{HL09}.)

The analysis of implied, local, and stochastic volatility and their
interplay has been subject of countless works; a very small selection
relevant to the present discussion is \cite{AnPi07, AvBOBuFr03, BeBuFl04,
FrGeGuSt11, HL05, Le04a}. Our contribution here is a formula
(\eqref{LocVolLee} below)
that allows for approximation of $\sigma _{\mathrm{loc}}^{2}(K,T)$ when $K$
is large (and similarly, $K$ is small). The main ingredient to this formula
is a known moment generating function (mgf) of the log-price $(X_{t})$
(under the pricing measure), 
\begin{equation*}
M(s,T):=\exp (m(s,T)):=\mathbb{E}\exp (sX_{T}),
\end{equation*}%
assumed to be finite in some (maximal) interval $(s_{-}(T),s_{+}(T))$ with
critical exponents $s_{-}$ and $s_{+}$ defined as%
\begin{equation*}
s_{-}(T):=\inf \left\{ s:M(s,T)<\infty \right\} ,\qquad s_{+}(T):=\sup
\left\{ s:M(s,T)<\infty \right\} .
\end{equation*}%
We also assume that call prices have sufficient regularity to make Dupire's
formula~\eqref{eq:dupire} well-defined, and that the mgf blows up at the
upper critical moment:
\begin{equation}\label{eq:blowup}
\lim_{s\uparrow s_{+}(T)}M(s,T)=\infty .
\end{equation}
This holds, e.g., in the Heston model~\cite{He93}, with log-price $X_{t}=\log
(S_{t}/S_{0})$, where 
\begin{align*}
dS_{t}& =S_{t}\sqrt{Y_{t}}dW_{t},\qquad S_{0}=s_{0}>0, \\
dV_{t}& =(a+bV_{t})dt+c\sqrt{V_{t}}dZ_{t},\qquad V_{0}=v_{0}>0,
\end{align*}%
with $a\geq 0$, $b\leq 0$, $c>0$, and $d\left\langle W,Z\right\rangle
_{t}=\rho dt$ with $\rho \in (-1,1)$. We will prove the following theorem,
which is reminiscent of Lee's formula~\cite{Le04a} for implied volatility.
\begin{theorem}
\label{thm:mainHeston} In the Heston model with $\rho \leq 0$ (the relevant
regime in practice), the following local volatility approximation holds:%
\footnote{
  By a common abuse of notation, we write $\sigma _{\mathrm{loc}}^{2}(k,T)$
  instead of $\sigma _{\mathrm{loc}}^{2}(e^k,T)$ when we wish to
  express the local vol as a function of log-strike~$k$.
}
\begin{equation}\label{eq:heston asympt}
\lim_{k\rightarrow \infty }\frac{\sigma _{\mathrm{loc}}^{2}(k,T)}{k}=\frac{2%
}{T\,s_{+}(s_{+}-1)R_{1}/R_{2}},
\end{equation}%
where $k=\log \left( K/S_{0}\right) ,$ $s_{+}\equiv s_{+}(T)$ and 
\begin{align}
R_{1}& =c^{2}s_{+}(s_{+}-1)\left[ c^{2}(2s_{+}-1)-2\rho c(s_{+}\rho c+b)%
\right]  \label{eq:R1} \\
& \quad -2(s_{+}\rho c+b)\left[ c^{2}(2s_{+}-1)-2\rho c(s_{+}\rho c+b)\right]
\notag \\
& \quad +4\rho c\left[ c^{2}s_{+}(s_{+}-1)-(s_{+}\rho c+b)^{2}\right] , \notag \\
R_{2}& =2c^{2}s_{+}(s_{+}-1)\left[ c^{2}s_{+}(s_{+}-1)-(s_{+}\rho c+b)^{2}%
\right]. \label{eq:R2}
\end{align}
\end{theorem}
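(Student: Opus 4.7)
The plan is to split the proof into two parts: a model-free reduction of $\sigma_{\mathrm{loc}}^{2}(k,T)/k$ to $-2s_{+}'(T)/[s_{+}(s_{+}-1)]$ via tail asymptotics of the density of $X_{T}$ and Dupire's identity, and then an explicit calculation of $s_{+}'(T)$ for Heston by implicit differentiation, which produces the algebraic quantities $R_{1},R_{2}$.

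For the first part, I would use the blow-up assumption (\ref{eq:blowup}) together with the explicit form of the Heston mgf to locate the singularity of $M(\cdot,T)$ at $s_{+}(T)$ and identify its nature (algebraic, of square-root type, in the relevant regime). A Tauberian theorem of exponential--polynomial type, or direct deformation of the Bromwich inversion contour, then yields
\begin{equation*}
p_{X}(x,T) \sim A(T)\, x^{-\alpha}\, e^{-s_{+}(T)\,x},\qquad x\to\infty,
\end{equation*}
for some $A(T)>0$ and exponent $\alpha>0$ fixed by the singularity type. The delicate point here is that I also need the analogous asymptotic for $\partial_{T} p_{X}(x,T)$: since asymptotic expansions are not term-wise differentiable in general, I would work with the Mellin/Laplace inversion formula directly, exploiting the explicit joint analytic dependence of $M(s,T)$ on $(s,T)$ to interchange $\partial_{T}$ with the contour integral and control the error uniformly in $x$.

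Substituting into Dupire's formula (\ref{eq:dupire}) and using $K^{2}\partial_{KK}C(K,T) = e^{k} p_{X}(k,T)$ with $k=\log(K/S_{0})$, two successive integrations by parts in $C(K,T)=\ex[(e^{X_{T}}-e^{k})^{+}]$ give $C(K,T) \sim A(T)\, k^{-\alpha}\, e^{-(s_{+}-1)k}/[s_{+}(s_{+}-1)]$, so the denominator of Dupire asymptotes to $s_{+}(s_{+}-1)\,C(K,T)$. Differentiating the expansion in $T$, the dominant contribution is $\partial_{T} e^{-(s_{+}(T)-1)k} = -k\,s_{+}'(T)\,e^{-(s_{+}-1)k}$, whence $\partial_{T} C \sim -k\,s_{+}'(T)\,C$. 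Consequently
\begin{equation*}
\frac{\sigma_{\mathrm{loc}}^{2}(k,T)}{k}\;\longrightarrow\; \frac{-2\,s_{+}'(T)}{s_{+}(s_{+}-1)}.
\end{equation*}

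It then remains to compute $s_{+}'(T)$ for Heston. The critical moment is defined implicitly by $T_{+}(s_{+}(T))=T$, where $T_{+}(s)$ is the explosion time of the mgf and, for $\rho\le 0$, is explicitly given in terms of an inverse-tangent function of the Heston Riccati discriminant $d(s)^{2}=(\rho c s-b)^{2}-c^{2}(s^{2}-s)$ and the linear factor $\rho c s + b$. By the chain rule, $s_{+}'(T)=1/T_{+}'(s_{+})$, and a direct computation of $T_{+}'(s)$, involving derivatives of $d(s)^{2}$ and of $\rho c s+b$, produces exactly the combinations appearing in the bracketed factors of $R_{1}$ and $R_{2}$; after rearrangement one obtains $-T s_{+}'(T) = R_{2}/R_{1}$, which inserted in the previous display yields (\ref{eq:heston asympt}). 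The algebraic simplification is heavy but mechanical; the conceptually hardest step remains the $\partial_{T}$ transfer in the inversion formula described above, which is why the theorem is stated specifically for Heston (where closed-form analyticity of $M(s,T)$ in $T$ is available) rather than in a model-free form.
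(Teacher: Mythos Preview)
Your overall strategy---reduce $\sigma_{\mathrm{loc}}^{2}(k,T)/k$ to $-2s_{+}'(T)/[s_{+}(s_{+}-1)]$ and then compute $s_{+}'(T)$ by implicit differentiation of $T^{*}(s_{+}(T))=T$---is correct and coincides with the paper's. The paper phrases the first reduction in terms of the \emph{critical slope} $\sigma(T)=-\partial T^{*}/\partial s|_{s_{+}(T)}$, so that the limit is $2/[\sigma(T)s_{+}(s_{+}-1)]$; since $s_{+}'(T)=-1/\sigma(T)$ this is your formula, and $\sigma(T)=TR_{1}/R_{2}$ (quoted from earlier work) matches your $-Ts_{+}'(T)=R_{2}/R_{1}$.

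There is, however, a factual error in your first part: the Heston mgf does \emph{not} have an algebraic (``square-root type'') singularity at $s_{+}$. One has $m(s,T)\sim v_{0}/[\tfrac{c^{2}}{2}\sigma(T)(s_{+}-s)]$, so $M(s,T)\sim\exp(\mathrm{const}/(s_{+}-s))$ is an essential singularity. Consequently the log-price density is not $A(T)k^{-\alpha}e^{-s_{+}k}$ but carries an additional factor $e^{2\beta\sqrt{k}}$ with $\beta=\sqrt{2v_{0}}/(c\sqrt{\sigma})$, and the Tauberian route you sketch (which the paper reserves for models like variance gamma) does not apply; the correct tool is the saddle point method with $\hat s\approx s_{+}-\beta k^{-1/2}$. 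Your conclusion is nonetheless salvageable, because with the corrected exponent $-s_{+}(T)k+2\beta(T)\sqrt{k}$ the $T$-derivative is still dominated by $-ks_{+}'(T)$, so $\partial_{T}C\sim -ks_{+}'(T)\,C$ remains true to leading order.

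The paper's execution of the reduction differs from yours in that it does not pass through density/call asymptotics followed by $T$-differentiation. Instead it writes Dupire's formula directly as a ratio of two Bromwich integrals, pushes $\partial_{T}$ inside the numerator integral (yielding the factor $\partial_{T}m(s,T)/[s(s-1)]$), and applies the saddle point method to both integrals simultaneously; the common saddle contribution cancels in the ratio, leaving $2\partial_{T}m(\hat s,T)/[\hat s(\hat s-1)]\sim 2k/[\sigma s_{+}(s_{+}-1)]$. The ``delicate $\partial_{T}$ transfer'' you flag is thereby replaced by tail estimates on the contour integrals, which the paper proves via ODE comparison bounds on the real and imaginary parts of the Riccati solution $\psi$ for large $|\Im s|$. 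Your plan would work once the singularity type is corrected, but making it rigorous amounts to essentially the same saddle-point-plus-tail-control analysis.
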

The origin of this result lies in the saddle point based approximation
formula 
\begin{equation}
\sigma _{\mathrm{loc}}^{2}(k,T)\approx \left. \frac{2\frac{\partial }{%
\partial T}m(s,T)}{s(s-1)}\right\vert _{s=\hat{s}(k,T)},  \label{LocVolLee}
\end{equation}%
where $k$ denotes log-strike, and $\hat{s}=\hat{s}(k,T)$ is determined as
solution of the equation%
\begin{equation}
\frac{\partial }{\partial s}m(s,T)=k.  \label{ApproximateSaddlePoint}
\end{equation}%
As such, our formula \eqref{LocVolLee} is not restricted to the Heston
model. As a trivial example, let us consider the generalized Black-Scholes
model with time-dependent volatility,%
\begin{equation*}
dS_{t}=S_{t}\sqrt{v(t)}dW_{t}.
\end{equation*}%
We find $m(s,T)=\frac{1}{2}s(s-1)\int_{0}^{T}v(t)dt$, and then, correctly,%
\begin{equation*}
\sigma _{\mathrm{loc}}^{2}(k,T)=v(T).
\end{equation*}%
(In this example the evaluation of $\hat{s}(k,T)$ plays no role, since the
fraction on the right hand side of~\eqref{LocVolLee} does not depend on it.) 

\bigskip 

In fact, we expect our approximation formula~\eqref{LocVolLee} to work whenever
the saddle
point method is applicable (also assuming that call prices are smooth enough
to make~\eqref{eq:dupire} a well-defined quantity); the essence of the
argument is given in Section~\ref{se:saddle}. Of course, the ultimate
justification of a saddle point approximation involves tail estimates which
may present mathematical challenges (while easy to observe
numerically); in the Heston case,  we
achieve this by a subtle application of ODE comparison results, applied to
the underlying Riccati equations (cf.\ Appendix~\ref{se:app heston}), thus
completing our proof of the above theorem.
The asymptotic equivalence of~\eqref{eq:heston asympt} and~\eqref{LocVolLee}
is discussed in Section~\ref{se:heston}.

Interestingly, even when the blow-up of the mgf is too slow to apply the saddle point
method, the approximation formula~\eqref{LocVolLee} can give surprisingly
accurate results. Our attempt to understand this phenomenon, besides
numerical evidence in the variance gamma model for suitable parameters,
discussed in Section~\eqref{se:jumps}, passes through Karamata's
Tauberian theorem and is
the content of Section~\ref{se:karamata}. We have not pushed our investigations too
far, however, since the meaning of Dupire's local volatility in the presence
of jumps may be questioned (cf.\ our comment below on extension of Dupire's
formula to jump settings.)

Various additional comments are in order.

\begin{enumerate}
\item Equation~\eqref{ApproximateSaddlePoint} is solvable for large~$k$,
since~\eqref{eq:blowup} implies 
\begin{equation*}
\lim_{s\uparrow s_{+}}\frac{\partial }{\partial s}m(s,T)=\infty .
\end{equation*}

\item If there is no blow-up, i.e.~\eqref{eq:blowup} does not hold, then~%
\eqref{LocVolLee} is typically incorrect. See Example~\ref{ex:NIG} in
Section~\ref{se:jumps} for some hints on how to handle such cases.

\item We have $\hat{s}(k,T)\uparrow s_{+}(T)$ as $k\rightarrow \infty $;
hence, in models with moment explosion \cite{AnPi07, Le04a}, where $%
s_{+}(T)<\infty $, the denominator in~\eqref{LocVolLee} may be replaced by $%
s_{+}(T)(s_{+}(T)-1)$. While this is correct to first order, it is often
preferable to use~\eqref{LocVolLee} as it is, and to calculate~$\hat{s}(k,T)$
by (numerically) solving~\eqref{ApproximateSaddlePoint}. This tradeoff
between simple formulas and numerical precision is illustrated in several
examples in Sections~\ref{se:heston} and~\ref{se:jumps}. The comment
applies in particular to the Heston model.

\item There is a version of our approximation formula~\eqref{LocVolLee} for \emph{small}
values of $K$ (i.e.\ $K\downarrow 0$, or $k\downarrow -\infty$), which
requires that
the mgf blows up at the lower critical moment~$s_{-}(T)$. If $k<0$ and $|k|$
is large, equation~\eqref{ApproximateSaddlePoint} has a unique solution $%
\hat{s}_{-}(k,T)<0$. Then the approximation~\eqref{LocVolLee} holds, if~$%
\hat{s}$ is replaced by $\hat{s}_{-}$. 

\item There are extensions of Dupire's work to jump diffusions and also pure
jump models; the resulting \textquotedblleft local\textquotedblright\
version of these models is studied in~\cite{BeCo10}. Local L\'{e}vy models
were introduced earlier in~\cite{CaGeMaYo04}. In particular, Dupire's
formula (which may be written as a PDE) becomes a PIDE which features an
integral term involving the second derivative of~$C$ w.r.t.\ strike, times a
kernel depending on $K$, integrated against all strikes in $(0,\infty )$.
(The formula, which we need not reproduce here in full technical detail,
appears in Theorem~1 of~\cite{BeCo10}.)

Another difficulty in the jump setting is the potential lack of immediate
smoothing. For instance, the variance gamma model satisfies the above PIDE
only in viscosity sense; in fact, call prices in the variance gamma model
may not be twice differentiable in $K$ for small times, as was noted in~\cite%
{CoVo05}. But for sufficiently large times our formula~\eqref{LocVolLee}
works, see Example~\ref{ex:var gamma} in Section~\ref{se:jumps}.

We conclude that, in a general jump setting, Dupire's formula, as stated in~%
\eqref{eq:dupire}, may be ill-defined; moreover, even if call prices are
smooth enough to make the formula well-defined, it fails to recreate the
correct marginals of the original price process.

\item Even so, given the industry practice of applying Dupire's formula to
any given call price surface, we discuss in Section~\ref{se:jumps} what
happens when applying~\eqref{LocVolLee} to jump models, if possible.
Formula~\eqref{LocVolLee} simplifies in exponential L\'{e}vy models, which
have the property that $m(s,T)$ is linear in $T$; thus, the numerator in~%
\eqref{LocVolLee} may be replaced by $2m(s,1)$. In jump models, we also
expect $\sigma _{\mathrm{loc}}^{2}(k,T)$ to explode as $T\downarrow 0$, and
we shall observe and quantify this blow-up in some examples below. There is
potential practical value in that a Dupire local volatility surface, fitted
to market data, may so be inspected for evidence of jump behavior (thereby
questioning the use of Dupire's formula in the first place).
\end{enumerate}

\section{Saddle point asymptotics}

\label{se:saddle}

As is well known~\cite{CaMa99}, we can recover the call price~$C$ and the
probability density~$D(\cdot,T)$ of~$S_T$ by Laplace-Fourier inversion from
the mgf: 
\begin{equation}  \label{eq:call}
C(K,T) = \frac{e^k}{2i\pi} \int_{-i\infty}^{i\infty} e^{-ks}\frac{M(s,T)}{%
s(s-1)} ds,
\end{equation}
\begin{equation}  \label{eq:density}
D(x,T) = \frac{1}{2i\pi} \int_{-i\infty}^{i\infty} e^{-(s+1)\log x}M(s,T) ds.
\end{equation}
Now differentiate the call price under the integral sign: 
\begin{equation}  \label{eq:call dot}
\partial_T C(K,T) = \frac{e^k}{2i\pi} \int_{-i\infty}^{i\infty} \frac{%
\partial_T m(s,T)}{s(s-1)} e^{-ks}M(s,T) ds.
\end{equation}
By Dupire's formula, we have 
\begin{equation}  \label{eq:dup frac}
\sigma_{\mathrm{loc}}^2(k,T) = \frac{2\partial_T C(K)}{ K^2 D(K,T)} = \frac{2
\int_{-i\infty}^{i\infty} \frac{\partial_T m(s,T)}{s(s-1)} e^{-ks}M(s,T) ds%
} {\int_{-i\infty}^{i\infty} e^{-ks} M(s,T) ds}.
\end{equation}
Both integrands in~\eqref{eq:dup frac} have a singularity at $s=s_+$, since $%
M(s,T)$ gets infinite there. The singular behavior of $M(s,T)$ dominates the
asymptotics of both integrals. The resulting asymptotic factor cancels, and
only the contribution of $2\frac{\partial_T m(s,T)}{s(s-1)}$ remains. This is
the idea behind~\eqref{LocVolLee}.

To implement it, we analyze both integrals in~\eqref{eq:dup frac} by a
saddle point approximation~\cite{deBr58}. If~$M$ features an exponential
blow-up at the critical moment~$s_+$, its validity can be justified rather
universally. Examples include the Heston model, double exponential L\'evy,
and Black-Scholes. (Note that the critical moment is $s_+=\infty$ for
Black-Scholes.) If the saddle point method is not applicable (because of
insufficient blow-up), different arguments are required; see the following
section.

So let us proceed with the saddle point analysis of~\eqref{eq:dup frac}. For
both integrals, we only use the factor $e^{-k s}M(s)$ to find the location
of the (approximate) saddle point~$\hat{s}=\hat{s}(k,T)$. The saddle point
equation is~\eqref{ApproximateSaddlePoint}, obtained by equating the
derivative of $e^{-k s}M(s)$ to zero. We move the integration contour
through the saddle point. Then, for large~$k$, only a small part $%
|\Im(s)|\leq h(k)$ of the contour, around the saddle point, matters
asymptotically. (The choice of the function~$h$ depends on the singular
expansion of~$M$; see Section~\ref{se:heston} for an example.) The integral
can be approximated via a local expansion of the integrand. Let us carry
this out for the denominator of~\eqref{eq:dup frac}. (In the following
formulas we write~$m^{\prime \prime }$ for $\partial^2 m/\partial s^2$.) 
\begin{align}
\int_{\hat{s}-i\infty}^{\hat{s}+i\infty} & e^{-ks} M(s,T) ds \sim \int_{\hat{%
s}-i h(k)}^{\hat{s}+i h(k)} e^{-ks} M(s,T) ds  \notag \\
&\sim \int_{\hat{s}-i h(k)}^{\hat{s}+i h(k)} \exp\left( -ks + m(\hat{s},T) +
k (s-\hat{s}) + \tfrac12 m^{\prime \prime }(\hat{s},T) (s-\hat{s})^2 \right)
ds  \notag \\
&= e^{m(\hat{s},T)-k\hat{s}} \int_{\hat{s}-i h(k)}^{\hat{s}+i h(k)}
\exp\left( \tfrac12 m^{\prime \prime }(\hat{s},T) (s-\hat{s})^2 \right) ds.
\label{eq:denominator}
\end{align}
In the Taylor expansion of the exponent we have used the equation $m^{\prime
}(\hat{s},T)=k$. Now the crucial observation is that the numerator of~%
\eqref{eq:dup frac} admits a similar approximation, where the only new
ingredient is the factor $2\frac{\partial_T m(s,T)}{s(s-1)}$: 
\begin{align}
2 & \int_{-i\infty}^{i\infty} \frac{\partial_T m(s,T)}{s(s-1)} e^{-ks}M(s,T)
ds  \notag \\
&\sim 2 \int_{\hat{s}-i h(k)}^{\hat{s}+i h(k)} \frac{\partial_T m(s,T)}{%
s(s-1)} e^{-ks}M(s,T) ds  \notag \\
&\sim 2 e^{m(\hat{s},T)-k\hat{s}} \int_{\hat{s}-i h(k)}^{\hat{s}+i h(k)} 
\frac{\partial_T m(\hat{s},T)}{\hat{s}(\hat{s}-1)} \left(1 + o(1) \right)
\exp\left( \tfrac12 m^{\prime \prime }(\hat{s},T) (s-\hat{s})^2 \right) ds 
\notag \\
&\sim 2 \frac{\partial_T m(\hat{s},T)}{\hat{s}(\hat{s}-1)} e^{m(\hat{s},T)-k%
\hat{s}} \int_{\hat{s}-i h(k)}^{\hat{s}+i h(k)} \exp\left( \tfrac12
m^{\prime \prime }(\hat{s},T) (s-\hat{s})^2 \right) ds.  \label{eq:numerator}
\end{align}
Dividing~\eqref{eq:numerator} by~\eqref{eq:denominator} concludes the
derivation. Summarizing, we note that the asymptotics of~$\sigma_{\mathrm{loc%
}}^2(k)$ are governed by the local expansions at $s=\hat{s}$ of the
integrands in~\eqref{eq:dup frac}. The respective first terms of both
expansions agree, and thus cancel, except for the factor~\eqref{LocVolLee}.

\section{Algebraic singularities and Karamata's theorem}

\label{se:karamata}

The saddle point method is well suited to treat mgfs of exponential growth,
such as $M(s,T) \approx \exp(1/(s_+-s))$, but fails in cases of slower
blow-up. To see how to analyze these, let us assume that the mgf~$M$ grows
like a power at the (finite) critical moment: 
\begin{equation*}
M(s,T) \sim \frac{c_1 }{(s_+-s)^{c_2}}, \qquad s\uparrow s_+. 
\end{equation*}
(The variance gamma model is a typical instance.) The quantities $%
c_1=c_1(T)>0$ and $c_2=c_2(T)>0$ are independent of~$s$, but may be
functions of maturity~$T$. (In particular, we assume that~$c_2$ does depend
on~$T$, which holds in L\'evy models.) Since 
\begin{equation*}
\frac{\partial}{\partial s}m(s,T) \sim \frac{c_2}{s_+-s}, 
\end{equation*}
the saddle point~$\hat{s}$ satisfies 
\begin{equation*}
\hat{s} \approx s_+ - \frac{c_2}{k}. 
\end{equation*}
Inserting this into the time derivative 
\begin{equation*}
\frac{\partial}{\partial T}m(s,T) \sim \dot{c}_2(T) \log \frac{1}{s_+-s}
\end{equation*}
of~$m$ yields 
\begin{equation}  \label{eq:goal}
\left. \frac{2\frac{\partial }{ \partial T}m\left( s,T\right) }{s\left(
s-1\right) }\right\vert _{s=\hat{s} \left( k,T\right) } \sim \frac{2 \dot{c}%
_2(T)\log k}{s_+(s_+-1)}.
\end{equation}
To justify~\eqref{LocVolLee}, we now have to argue that~$\sigma_{\mathrm{loc}%
}^2(k,T)$ has the same asymptotics. Again, we use the representation~%
\eqref{eq:dup frac}. To put it briefly, the reason why the approach from the
preceding section fails is that one cannot find a suitable~$h(k)$. (Either
the tails $|\Im(s)|>h(k)$ of the integrals are not negligible, or the local
expansion is not uniformly valid.) But it is still true that the local
behavior of the integrands near~$s_+$ fully determines the asymptotics of
the integrals in~\eqref{eq:dup frac}.

We write $f(\cdot,T)$ for the probability density of the log-price~$X_T$.
Note that the denominator in~\eqref{eq:dup frac} equals $2i\pi f(k,T)$. The
(one-sided!) Laplace transform of~$k \mapsto e^{s_+ k}f(k,T)$ satisfies 
\begin{equation}  \label{eq:lap}
\int_0^\infty e^{-sk} e^{s_+ k}f(k,T) dk \sim c_1 s^{c_2}, \qquad
s\downarrow 0.
\end{equation}
This follows from 
\begin{equation}  \label{eq:lap M}
M(s,T) = \int_{-\infty}^0 e^{sk} f(k,T) dk + \int_0^{\infty} e^{sk} f(k,T)
dk \sim \frac{c_1}{(s_+ - s)^{c_2}}, \qquad s\uparrow s_+,
\end{equation}
since the first integral in~\eqref{eq:lap M} is~$O(1)$. By Karamata's
Tauberian theorem~\cite[Theorem~1.7.1]{BiGoTe87}, we obtain from~%
\eqref{eq:lap} that 
\begin{equation*}
\int_0^k e^{s_+ x} f(x,T) dx \sim \frac{c_1 k^{c_2} }{\Gamma(c_2+1)}, \qquad
k\to\infty, 
\end{equation*}
hence, by differentiating, 
\begin{equation}  \label{eq:denom as}
f(k,T) \approx e^{-s_+ k} \frac{c_1 k^{c_2-1}}{\Gamma(c_2)}, \qquad
k\to\infty.
\end{equation}
Similarly, the asymptotics 
\begin{equation*}
\frac{\partial_T m(s,T)}{s(s-1)} M(s,T) \sim \frac{\dot{c}_2}{s_+(s_+ -1)} 
\log \frac{1}{s_+-s} \times  \frac{c_1}{(s_+ -s)^{c_2}}, \qquad s\uparrow
s_+, 
\end{equation*}
imply that the numerator in~\eqref{eq:dup frac} approximately equals 
\begin{equation}  \label{eq:num as}
\approx \frac{2 c_1 \dot{c}_2}{s_+(s_+ -1)\Gamma(c_2)} e^{-s_+ k} k^{c_2-1}
\log k.
\end{equation}
Now divide~\eqref{eq:num as} by~\eqref{eq:denom as} to see that $\sigma_{%
\mathrm{loc}}^2(k,T)$ approximately equals~\eqref{eq:goal}.
Note that we did not talk about Tauberian conditions, which are necessary to make this
derivation rigorous, such as monotonicity of the density. In concrete
cases, where an analytic continuation of the mgf is available, a Hankel
contour approach~\cite{Ford60} might be preferable to Karamata's theorem.

\section{Local vol at extreme strikes in the Heston model}

\label{se:heston}

In this section we give a numerical example and explain how~\eqref{eq:heston asympt}
is obtained by specializing~\eqref{LocVolLee}.
(But recall that~\eqref{LocVolLee} is so far just a recipe and not a theorem;
a rigorous proof of~\eqref{eq:heston asympt} is given in Appendix~\ref{se:app heston}.)

Figure~\ref{fig:heston} compares the approximations~\eqref{eq:heston asympt}
and~\eqref{LocVolLee} for the local vol.
While asymptotically equivalent, the plot suggests that~\eqref{eq:heston asympt}
has an $O(1)$ error term, whereas the error of~\eqref{LocVolLee} seems to be
only~$o(1)$. Note that the right hand side of~\eqref{LocVolLee} can be easily evaluated
numerically, by using the explicit expression~\cite{He93} of the Heston mgf
in~\eqref{ApproximateSaddlePoint}.
\begin{figure}[ht]
\centering
\includegraphics[width=8truecm]{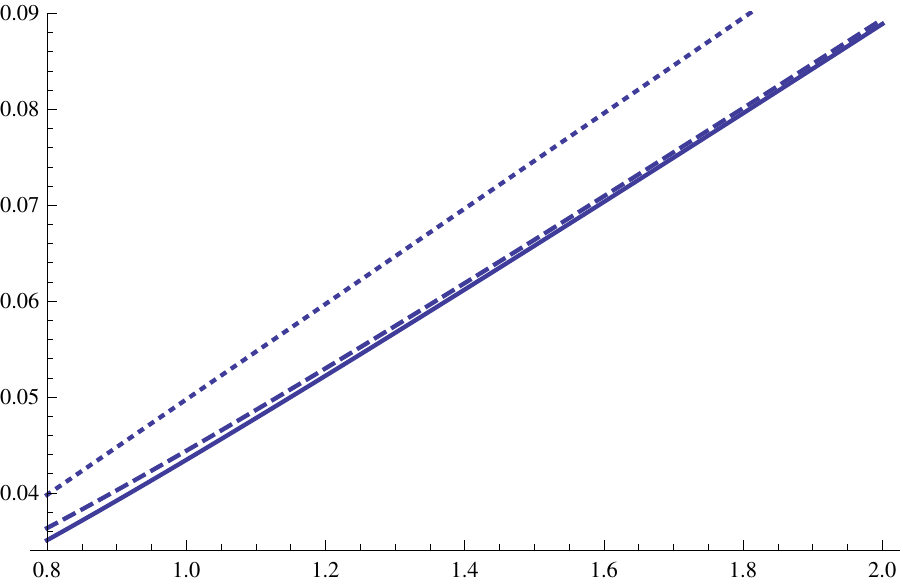}   
\caption{Local volatility squared $\protect\sigma_{\mathrm{loc}}^{2}(k,T)$
(solid curve) in the Heston model, with parameters $T=1$, $a=0.0428937$, $%
b=-0.6067$, $c=0.2928$, $s_0=1$, $v_0=0.0654$, $\protect\rho=-0.7571$. The
approximation~\eqref{LocVolLee} is dashed, and~\eqref{eq:heston asympt} is
dotted. }
\label{fig:heston}
\end{figure}
We will now show that the right hand side of~\eqref{LocVolLee} is indeed
asymptotically equivalent to the right hand side of~\eqref{eq:heston asympt}.
This requires us
to know that 
\begin{equation}
2\left. \frac{\partial }{\partial T}m( s,T) \right\vert _{s=\hat{s}( k,T)
}\sim 2k/\sigma,  \label{HestonClaim}
\end{equation}%
where $\sigma =\sigma ( T) $ is the so-called critical slope,
defined as  
\begin{align}
  \sigma (T) &=-\frac{\partial T^{\ast }}{\partial s}(s_{+}(T)), \label{eq:slope} \\
  T^{\ast }( s) &=\sup \{ t\geq 0:\mathbb{E}[ e^{sX_{t}}] <\infty \}. \notag
\end{align}
In fact, while the computation of the critical exponent $s_{+}$ in the 
Heston model requires simple numerics, the critical slope can be computed in
closed form \cite{FrGeGuSt11}; we have $\sigma ( T) =TR_{1}/R_{2}$, 
where~$R_i=R_i(b,c,\rho,s_+(T))$, $i=1,2$,
are defined in~\eqref{eq:R1}--\eqref{eq:R2}.

Since $\hat{s}( k,T) \rightarrow s_{+}( T) $ as $k\rightarrow \infty $,
the right hand side of~\eqref{LocVolLee} then satisfies%
\footnote{%
It is worth noting that $\sigma ( T) \sim const \times T$  as $T\rightarrow
\infty $. This suggests that $\sigma _{\mathrm{loc}}^{2}(  kT,T) $ admits a
non-degenerate limit as $T\rightarrow \infty $;  Gatheral's SVI limit of
Heston implied volatility was obtained in a similar  regime.} 
\begin{equation*}
\left. \frac{2\frac{\partial }{%
\partial T}m(s,T)}{s(s-1)}\right\vert _{s=\hat{s}(k,T)}
\sim \frac{2}{\sigma ( T) s_{+}( T) (
s_{+}( T) -1) }\times k, \qquad k\rightarrow \infty,
\end{equation*}
which is the formula from Theorem~\ref{thm:mainHeston}.
Let us now discuss validity of~\eqref{HestonClaim}. The argument which
follows nicely illustrates how formula~\eqref{LocVolLee} is used in
stochastic volatility models of affine type. First, $m( s,t) \approx
v_{0}\psi ( s,t) $ for a function $\psi $ for which we know\footnote{%
This follows from a straightforward analysis of the Riccati  equations~\cite%
{FrGeGuSt11}.} 
\begin{equation*}
\psi ( s,t) \sim \frac{1}{\frac{c^{2}}{2}( T^{\ast }( s) -t) }, \qquad
t\uparrow T^{\ast }( s),
\end{equation*}%
and also 
\begin{equation*}
\frac{\partial }{\partial t}\psi ( s,t) \sim \frac{1}{\frac{c^{2}}{2}(
T^{\ast }( s) -t) ^{2}}, \qquad t\uparrow T^{\ast }( s) .
\end{equation*}%
If we write $s_+=s_{+}( T) $ when $T$ is fixed, this translates to 
\begin{align}
m( s,t) &\sim \frac{v_{0}}{\frac{c^{2}}{2}\sigma ( s_+-s) }, \qquad
s\uparrow s_+,  \notag \\
\frac{\partial }{\partial s}m( s,t) &\sim \frac{v_{0}}{\frac{c^{2}}{2}\sigma
( s_+ -s) ^{2}}, \qquad s\uparrow s_+,  \label{eq:m} \\
\frac{\partial }{\partial T}m( s,T) &\sim \frac{v_{0}}{\frac{c^{2}}{2}(
\sigma (s_+ -s) ) ^{2}},\qquad s\uparrow s_+.  \label{eq:m_T}
\end{align}%
Equation (\ref{ApproximateSaddlePoint}) leads to $\hat{s}=s_+ -\beta
k^{-1/2} + o(k^{-1/2})$, since 
\begin{equation*}
\frac{\partial }{\partial s}m( s,t) \sim \frac{v_{0}}{\frac{c^{2}}{2}\sigma
( s_+ -\hat{s}) ^{2}}=k\implies s_+ -\hat{s}\sim\beta k^{-1/2}
\end{equation*}%
with $\beta =\frac{\sqrt{2v_{0}}}{c\sqrt{\sigma }}$. Substitution then yields%
\begin{equation*}
\frac{\partial }{\partial T}m( s,T) |_{s=\hat{s}}\sim \frac{v_{0}}{\frac{%
c^{2}}{2}\sigma ^{2}\beta ^{2}/k}=k/\sigma,
\end{equation*}%
which concludes our derivation of~\eqref{HestonClaim}.

\section{Some remarks on Dupire's formula for jump models}

\label{se:jumps}

As discussed in the introduction, a direct application of Dupire's
formula is not easy to justify in the presence of jumps. Even so,
given the industry practice of applying Dupire's formula to any given
call price surface, we now discuss
what happens when applying formula~\eqref{LocVolLee} to some examples of jump models.
\begin{example}[Double exponential L\'{e}vy]
\label{ex:kou}  For zero drift, the mgf is given by~\cite{CoTa04} 
\begin{equation*}
M(s,T) = \exp\left(T\left(\frac{\sigma^2 s^2}{2} + \lambda\left(  \frac{%
\lambda_+ p}{\lambda_+ - s} +  \frac{\lambda_-(1-p)}{\lambda_- + s}%
\right)\right)\right).  
\end{equation*}
The critical moment is $s_+=\lambda_+$, and the saddle point is located at  
\begin{equation}  \label{eq:kou sp}
\hat{s} \approx s_+ - \sqrt{\frac{\lambda \lambda_+ p T}{k}}.
\end{equation}
Formula~\eqref{LocVolLee} thus yields  
\begin{equation}  \label{eq:kou asympt}
\sigma_{\mathrm{loc}}^{2}(k,T) \approx \frac{2\sqrt{\lambda p}}{\sqrt{\lambda_+ T}%
(\lambda_+ -1)} k^{1/2}.
\end{equation}
In Figure~\ref{fig:kou}, the fit of~\eqref{eq:kou asympt}  is not
satisfactory (the dotted curve).  Similarly to the Heston model, the
approximation~\eqref{eq:kou asympt}  has on~$O(1)$ error term, whereas~%
\eqref{LocVolLee} seems to have~$o(1)$,  and gives a very good estimate.  
\begin{figure}[ht]
\centering
\includegraphics[width=8truecm]{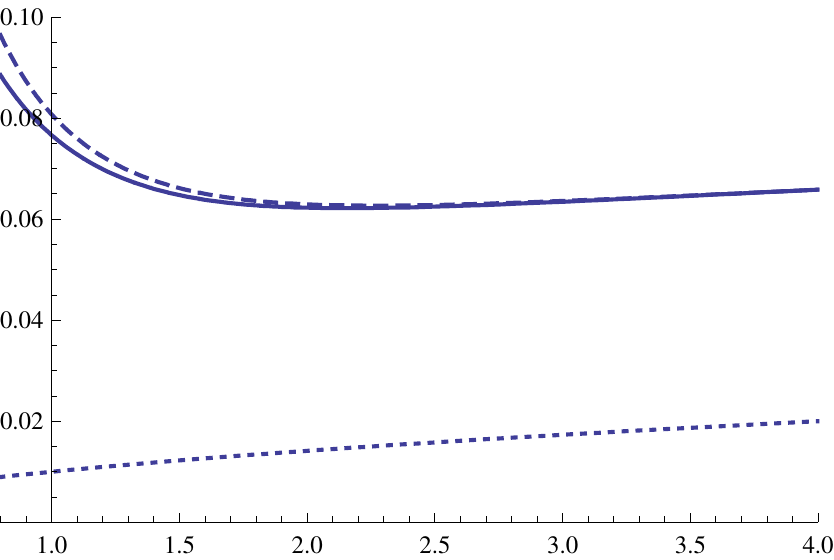}   
\caption{Local volatility squared $\protect\sigma_{\mathrm{loc}}^{2}(k,T)$
(solid curve) in the double exponential L\'{e}vy model, with parameters $T=1$%
, $\protect\sigma=0.2$, $\protect\lambda=10$, $p=0.3$, $\protect\lambda_- =
25$, $\protect\lambda_+ = 50$. The approximation~\eqref{LocVolLee} is
dashed, whereas~\eqref{eq:kou asympt} is dotted. }
\label{fig:kou}
\end{figure}

As mentioned in Section~\ref{se:intro}, one expects local volatility to explode for $%
T\downarrow 0$: We have $\sigma _{\mathrm{loc}}^{2}(k,T) \approx  const \times
T^{-1/2}$ in the double exponential L\'{e}vy model.  To see this, note that
in L\'evy models the saddle point  $\hat{s}(k,T)$ is a function of $k/T$,
and that the saddle point method  works for $T\downarrow 0$ as well as for $%
k\to\infty$. Therefore,~\eqref{eq:kou asympt}  is true for fixed~$k$ and $%
T\downarrow 0$, too.
\end{example}

\begin{example}[Variance gamma]
\label{ex:var gamma}  The mgf is given by~\cite{MaCaCh98}  
\begin{equation*}
M(s,T) = \left(\frac{1}{1-\theta\nu s-\tfrac12 \sigma^2 \nu s^2}
\right)^{T/\nu}.
\end{equation*}
We assume that $T>\nu/2$, which guarantees that the log-price has a density,
and hence that call prices are~$C^2$ (see Example~1 in~\cite{CoVo05}).
The critical moment is
\begin{equation*}
s_+ = \frac{\sqrt{2\nu\sigma^2+\nu^2 \theta^2}-\nu \theta}{\nu \sigma^2}.  
\end{equation*}
Since we have  
\begin{equation*}
m(s,T) = \log M(s,T) \sim \frac{T}{\nu} \log \frac{1}{s_+-s},  
\end{equation*}
the saddle point satisfies  
\begin{equation*}
\hat{s} \approx s_+ - \frac{T}{\nu k}.  
\end{equation*}
By~\eqref{LocVolLee}, we thus have  
\begin{equation}  \label{eq:varg asympt}
\sigma_{\mathrm{loc}}^{2}(k,T) \approx \frac{2\log (k/T)}{\nu s_+(s_+-1)}.
\end{equation}
According to Figure~\ref{fig:var gamma}, this  approximation kicks in only
for fairly large values of~$k$.  As in Section~\ref{se:heston} and Example~%
\ref{ex:kou},  an improved estimate is obtained by  using~\eqref{LocVolLee}
directly.   
\begin{figure}[ht]
\centering
\includegraphics[width=8truecm]{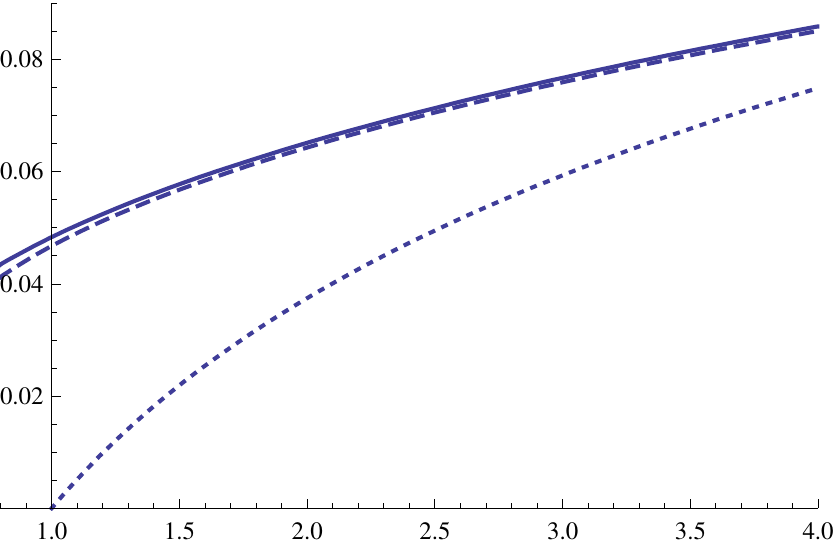}   
\caption{Local volatility squared $\protect\sigma_{\mathrm{loc}}^{2}(k,T)$
(solid curve) in the variance gamma model, with parameters $T=1$, $\protect%
\sigma=0.261652$, $\protect\theta=-0.218033$, $\protect\nu = 0.0552584$. The
approximation~\eqref{LocVolLee} is dashed, and~\eqref{eq:varg asympt} is
dotted. }
\label{fig:var gamma}
\end{figure}
%
\end{example}

\begin{example}[Normal inverse Gaussian]
\label{ex:NIG}
This is an example where condition~\eqref{eq:blowup} is violated, and
our formula~\eqref{LocVolLee} does not hold.
  The mgf  
\begin{equation*}
M(s,T) = \exp\left( \delta T\left( \sqrt{\alpha^2-\beta^2}  -\sqrt{%
\alpha^2-(\beta+s)^2} \right) \right)  
\end{equation*}
has no blow-up at the critical moment  
\begin{equation*}
s_+ = \alpha - \beta,  
\end{equation*}
but a square-root type singularity, with local expansion  
\begin{equation}  \label{eq:M NIG}
M(s,T) \approx e^{\delta T\sqrt{\alpha^2-\beta^2}}\left(1 - \delta T\sqrt{%
2\alpha} \sqrt{s_+-s} \right).
\end{equation}
It is still true that $\sigma_{\mathrm{loc}}^2(k,T)$ asymptotically depends,
via~\eqref{eq:dup frac},  on the local behavior of~$M(s,T)$ near~$s_+$.
However, the approximation~\eqref{LocVolLee} hinges  on the \emph{first}
term of the local expansion of $M(s,T)$. It therefore fails to capture  the
asymptotics of $\sigma_{\mathrm{loc}}^2(k,T)$, which  depend on the first 
\emph{singular} term (the term $\sqrt{s_+-s}$ in~\eqref{eq:M NIG}).  An
analysis can be done by a Hankel contour approach~\cite{Ford60}, and yields  
\begin{equation*}
\sigma_{\mathrm{loc}}^2(k,T) \approx \frac{2\left(1 + \delta T\sqrt{%
\alpha^2-\beta^2} \right)} {s_+(s_+ -1)}.  
\end{equation*}
The numerical fit is not very good, though, and further terms should be
computed  for improved accuracy.  We propose to return to this model and the
more general GH (generalized hyperbolic)  model in a future study.

The fact that $\sigma_{\mathrm{loc}}^2(k,T)$ converges to a constant might be
understood by comparing the  NIG marginals with those of Heston's in the
time $T\rightarrow \infty $  regime (this link is made precise in \cite{KR11}%
). In particular, the result  is then consistent with the Heston asymptotics~%
\eqref{eq:heston asympt}  of local vol; note that the right hand side
of~\eqref{eq:heston asympt} is $O(1/T)$ for $T\to\infty$.
\end{example}

\section{Conclusions}

We propose a new formula that expresses local volatility for extreme strikes as
a computable function of commonly available model information. In the Heston
model this leads to a proof that local volatility (squared) behaves
asymptotically linear in log-strike
(which is qualitatively similar to Lee's result~\cite{Le04a} for implied volatility).

Although we suspect that this Lee-type behavior remains true for models similar
enough to Heston (e.g.\ local stochastic volatility models with a Heston backbone~\cite{HL09}),
qualitatively different behavior is seen in models with jumps. We derived this
by applying our generic approximation formula~\eqref{LocVolLee}, supported by
numerical examples.

While this enhances our knowledge of local volatility in a variety of models, it
also has a clear impact on calibration of local volatility to market
data. Indeed,
liquid option data is typically available only in a restricted range
of strikes and
maturities; our results can then be used to extrapolate local
volatility in a way
that is consistent with Heston stochastic volatility or other chosen models.
In particular, this approach avoids any arbitrage possibilities
introduced by ad-
hoc specifications of the implied volatility surface. We also believe the present
methodology will turn out useful in the calibration of local
stochastic volatility
models to market smiles.
\bigskip

\textbf{Acknowledgment.} We thank Rama Cont and Jim Gatheral for helpful
discussions.

\begin{appendix}
\section{Proof of Theorem~\ref{thm:mainHeston} (local vol approximation
for the Heston model)}
\label{se:app heston}

By the exponential decay of the Heston mgf towards $\pm i \infty$,
the formulas \eqref{eq:call}--\eqref{eq:dup frac} are correct for the Heston
model. For the saddle point analysis of~\eqref{eq:dup frac}, we employ
the approximate saddle point
\[
  \hat{s}_{\mathrm{approx}}(k) := s_+ - \beta k^{-1/2},
\]
obtained by using~\eqref{eq:m} in~\eqref{ApproximateSaddlePoint}.
(Recall that $\beta=\sqrt{2v_0}/c\sqrt{\sigma}$, and that~$\sigma$ denotes
the critical slope defined in~\eqref{eq:slope}.)
This approximate saddle may be used for both integrals in~\eqref{eq:dup frac}.
As for the denominator, this was carried out in detail in~\cite{FrGeGuSt11},
where an expansion of the Heston density was determined.
The analysis of the numerator in~\eqref{eq:dup frac} is similar, except
that a new tail estimate is required.
But first we discuss the local expansion around the saddle point. Let us fix a number
$\alpha\in(\tfrac23,\tfrac34)$ and define $h(k)=k^{-\alpha}$.
Then, in the central range $|s-\hat{s}_{\mathrm{approx}}(k)| \leq h(k)$,
we have
\begin{align*}
  \frac{1}{s(s-1)} &= \frac{1}{s_+(s_+ -1)} + O(s_+ -s) \\
  &= \frac{1}{s_+(s_+ -1)} \left(1 + O(k^{-1/2}) \right)
\end{align*}
and (cf.~\eqref{eq:m_T})
\begin{align*}
  2\frac{\partial}{\partial T} m(s,T) &= \frac{2\beta^2}{\sigma(s_+-s)^2}
    + O\left( \frac{1}{s_+ -s} \right) \\
  &= \frac{2\beta^2}{\sigma} (\beta k^{-1/2} + O(k^{-\alpha}) )^{-2}
    + O(k^{-1/2}) \\
  &= \frac{2k}{\sigma} (1 + O(k^{1/2-\alpha})).
\end{align*}
Therefore, the local expansions of the two integrands in~\eqref{eq:dup frac}
agree, up to a factor that is given by
\begin{equation}\label{eq:new fac}
  \frac{2\partial_T m(s,T)}{s(s-1)} = \frac{2k}{\sigma s_+(s_+-1)} (1 + O(k^{1/2-\alpha})),
\end{equation}
where the error term holds uniformly w.r.t.\ the integration variable~$s$.
According to Theorem~1.2 of~\cite{FrGeGuSt11}, we have
\begin{equation}\label{eq:density as}
  \frac{1}{2i\pi} \int_{\hat{s}_{\mathrm{approx}}-ih(k)}%
  ^{\hat{s}_{\mathrm{approx}}+ih(k)}
    e^{-ks} M(s,T) ds
  \sim A_1 e^{(1-A_3)k + A_2 \sqrt{k}} k^{-3/4+a/c^2}
\end{equation}
for certain constants $A_1$, $A_2=2\beta$, and $A_3=s_+ +1$.
Analogously, we derive from~\eqref{eq:new fac} that
\begin{multline}\label{eq:tail new fac}
  \frac{1}{2i\pi}  \int_{\hat{s}_{\mathrm{approx}}-ih(k)}%
  ^{\hat{s}_{\mathrm{approx}}+ih(k)}
     \frac{2\partial_T m(s,T)}{s(s-1)} e^{-ks} M(s,T) ds \\
    \sim \frac{2k}{\sigma s_+(s_+-1)} \times
    A_1 e^{(1-A_3)k + A_2 \sqrt{k}} k^{-3/4+a/c^2}.
\end{multline}
Dividing~\eqref{eq:tail new fac} by~\eqref{eq:density as} shows our
claim~\eqref{eq:heston asympt}, provided that the tails
$|s-\hat{s}_{\mathrm{approx}}(k)| > h(k)$ of the integrals can be discarded.
For the denominator of~\eqref{eq:dup frac}, this was shown in
Lemma~A.3 of~\cite{FrGeGuSt11}.
So we proceed with the numerator. We consider only the upper tail, as the
lower one is handled by symmetry.
By Lemma~A.3 of~\cite{FrGeGuSt11}, there is a constant $B>0$ such that
\begin{equation}\label{eq:inner tail}
  \left| \int_{\hat{s}_{\mathrm{approx}}+ i h(k)}^{\hat{s}_{\mathrm{approx}}+iB}
     e^{-ks} M(s,T)  ds  \right| \leq
    e^{(1-A_3)k} \exp(A_2 \sqrt{k} - \tfrac12 \beta^{-1}
    k^{3/2-2\alpha} + O(\log k)).
\end{equation}
From~\eqref{eq:m} we obtain
\[
  \left|  \frac{\partial_T m(s,T)}{s(s-1)} \right| \leq const \times k
\]
for all~$s$ on the contour
in~\eqref{eq:inner tail}. This estimate can be absorbed into
the factor $\exp(O(\log k))$ in~\eqref{eq:inner tail},
so that we conclude
\begin{multline}\label{eq:inner tail 2}
  \left| \int_{\hat{s}_{\mathrm{approx}}+ I h(k)}^{\hat{s}_{\mathrm{approx}}+iB}
      \frac{\partial_T m(s,T)}{s(s-1)} e^{-ks} M(s,T)  ds \right| \\
    \leq  e^{(1-A_3)k} \exp(A_2 \sqrt{k} - \tfrac12 \beta^{-1}
    k^{3/2-2\alpha} + O(\log k)).
\end{multline}
This grows slower than~\eqref{eq:tail new fac} (compare the relevant factors
$k^{-3/4+a/c^2}$ resp.\ $\exp(- \tfrac12 \beta^{-1} k^{3/2-2\alpha})$).
As for $\Im(s) > B$, it was shown in~\cite{FrGeGuSt11} (Lemma~A.2)
that
\[
  \left| \int_{\hat{s}_{\mathrm{approx}}+ i B}^{\hat{s}_{\mathrm{approx}}+i\infty}
    e^{-ks} M(s,T)  ds \right|
    =O(\exp((1-A_3)k + \beta \sqrt{k})).
\] 
This was deduced from the exponential decay of~$M(s,T)$
for large $\Im(s)$ (Lemma A.1 in~\cite{FrGeGuSt11}). The following lemma implies
that the new factor $\partial_T m(s,T)/(s(s-1))$
grows only polynomially, so that the exponential decay of the integrand persists
for the numerator of~\eqref{eq:dup frac}.
This finishes the proof of Theorem~\ref{thm:mainHeston}.

To state the lemma, recall that $m(s,t) = \phi(s,t) + v_0 \psi(s,t)$, where $\phi$ and $\psi$
satisfy the Riccati equations
\begin{align*}
  \dot{\phi} &= a \psi, \qquad \phi(0) = 0, \\
  \dot{\psi} &= \tfrac12 (s^2-s)+\tfrac12 c^2 \psi^2
  +b\psi + s\rho c \psi, \qquad \psi(0) = 0.
\end{align*}
We have to show that $\dot{m}$ grows only polynomially
as $\Im(s) \to \infty$.
Because of the Riccati equations, it suffices to show this for~$\psi$.
Let us write $\psi = f + i g$ and $s = \xi + i y$.
\begin{lemma}
Let $T>0$, and assume that the real part~$\xi$ of~$s$ stays
bounded in some interval $1\leq \xi \leq \xi_{\mathrm{max}}$.
Then, there are positive constants $C_{i,T}$ ($i=1,2,3,4\,)$ such
that \thinspace for $y\geq y_{0}$, where $y_{0}$ depends only on $\xi _{\max
}$ and the other (fixed)\ model parameters of the Heston model,%
\begin{align*}
-C_{3,T}y^{2} &\leq f(t)\leq -C_{1,T}y, \\
-C_{4,T}y^{3} &\leq g(t)\leq C_{2,T}\,y.
\end{align*}%
In fact, we can take%
\begin{align*}
C_{1,T} &= 1/\left( 3c\right),  \\
C_{2,T} &= \frac{1}{2}\left( 2\xi_{\mathrm{max}} -1\right) T, \\
C_{3,T} &= T\left( 1+\frac{c^{2}}{2}C_{2,T}^{2}\right),  \\
C_{4,T} &= 2C_{3,T}Tc^{2}C_{2,T}.
\end{align*}
\end{lemma}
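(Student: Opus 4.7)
Writing $\psi = f + ig$ and $s = \xi + iy$, separating the Riccati equation $\dot\psi = \tfrac12(s^2 - s) + (b + s\rho c)\psi + \tfrac12 c^2 \psi^2$ into real and imaginary parts yields the coupled system
\begin{align*}
\dot f &= \tfrac12(\xi^2 - \xi - y^2) + (b + \rho c \xi) f - \rho c y g + \tfrac12 c^2(f^2 - g^2), \\
\dot g &= \tfrac12(2\xi - 1) y + \rho c y f + (b + \rho c \xi) g + c^2 f g,
\end{align*}
with $f(0) = g(0) = 0$. The plan is to establish the four claimed inequalities sequentially on $[0,T]$ by a bootstrap/continuation argument exploiting the dissipative structure of the Riccati equation at large $y$.

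First I would treat the upper bound on $g$. The key structural observation is the identity $\rho c y f + c^2 f g = c f(\rho y + c g)$, which vanishes at the quasi-equilibrium $g = -\rho y/c$, a quantity of size $O(y)$. Viewing $\dot g$ as a linear equation in $g$ with coefficient $b + \rho c \xi + c^2 f$ (strongly damping once $f$ has become very negative) and inhomogeneity $\tfrac12(2\xi - 1)y + \rho c y f$, a Gronwall/comparison argument around this equilibrium yields $g(t) \leq (\xi_{\max} - \tfrac12) y t \leq C_{2,T} y$. Next, the upper bound $f \leq -C_{1,T} y$ follows from the dominant balance $-\tfrac12 y^2 + \tfrac12 c^2 f^2 = 0$ in $\dot f$: the stable equilibrium $f = -y/c$ attracts $f$ strongly below $-y/(3c)$, with lower-order contributions $-\rho c y g$ and $-\tfrac12 c^2 g^2$ (now controlled by the just-established bound on $g$) absorbed by the conservative factor $1/(3c) < 1/c$. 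The lower bound $f \geq -C_{3,T} y^2$ comes from the crude estimate $\dot f \geq -\tfrac12 y^2 - \tfrac12 c^2 g^2 - O(y^2) \geq -(1 + \tfrac12 c^2 C_{2,T}^2) y^2 - O(y)$, obtained by inserting the upper bound on $|g|$ and integrating over $[0,T]$. Finally, the lower bound $g \geq -C_{4,T} y^3$ follows from $\dot g \geq -|\rho| c y |f| - O(y|g|) - O(y) \geq -|\rho| c y \cdot C_{3,T} y^2 - O(y^2)$ using the lower bound on $f$, integrated over $[0,T]$.

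The principal obstacle is the first step: the naive bound $|\rho c y f| \leq |\rho| c y \cdot C_{3,T} y^2 = O(y^3)$ would propagate through $\dot g$ to give $g$ of order $y^3$, in flagrant contradiction with the claim. The resolution is the exact cancellation identified above, but quantifying it rigorously requires a careful simultaneous bootstrap on $f$ and $g$, ruling out transient excursions of $g$ to intermediate scales between $y$ and $y^3$ before the damping from the $c^2 f g$ term sets in. This subtle interplay between the two components is the technical heart of the proof.
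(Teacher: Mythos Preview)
Your real/imaginary decomposition is actually more complete than the paper's: the paper writes the system as $\dot g = \tfrac12(2\xi y - y) + c^2 fg - \gamma g$ and $\dot f = \tfrac12(\xi^2 - y^2 - \xi) + \tfrac{c^2}{2}(f^2 - g^2) - \gamma f$, omitting the cross terms $\rho c y f$ and $-\rho c y g$ that you (correctly) retain. With those terms absent, the paper's argument is a straight chain of one-sided differential inequalities, in a different order from yours and with no bootstrap at all.

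Concretely, the paper proceeds as follows. (i) The upper bound $f(t) \leq -y/(3c)$ is imported from the proof of Lemma~A.1 in \cite{FrGeGuSt11}, so it comes \emph{first} and needs no information on $g$. (ii) The key step you are missing: $g(t) \geq 0$ on $[0,T]$, proved by a first-hitting-time contradiction --- at the first zero $\varepsilon_1 > 0$ one would have $\dot g(\varepsilon_1) = \tfrac12(2\xi - 1)y > 0$, which is impossible. (iii) With $g \geq 0$ and $f < 0$ the term $c^2 fg$ is nonpositive and is simply dropped, yielding $\dot g \leq \tfrac12(2\xi - 1)y$ and hence $g \leq C_{2,T} y$ by direct integration; no quasi-equilibrium or damping analysis is needed. (iv) The lower bounds on $f$ and then on $g$ follow by substituting these estimates back into the respective ODEs and integrating once more, essentially as in your last two steps.

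Thus the ``principal obstacle'' you flag never arises in the paper's version, and the sign lemma $g \geq 0$ replaces your simultaneous bootstrap. Your quasi-equilibrium picture around $g \approx -\rho y/c$ is the right intuition if one keeps the cross terms (and, under $\rho \leq 0$ and $f < 0$, the $g \geq 0$ argument would still go through), but the paper sidesteps all of this entirely.
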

\begin{proof}
It follows from the proof of Lemma~A.1 in~\cite{FrGeGuSt11}
that
(e.g.\ with $C_{1,T}:=T\theta =\frac{1}{c}\sqrt{1/6}%
\leq \frac{1}{3c}$)%
\begin{equation*}
f(t) \leq -T\theta y=-\frac{1}{c}\sqrt{1/6}y\leq -\frac{1}{3c}%
y =: -C_{1,T}y.
\end{equation*}%
We next provide a similar upper estimate for~$g$. To this end we first show that 
$g=g( t) $ remains $\geq 0$ for all times $t>0$. The differential
equation for $g$,%
\begin{equation*}
\dot{g}=\frac{1}{2}\left( 2\xi y-y\right) +c^{2}fg-\gamma g, \qquad
g(0) =0,
\end{equation*}%
implies the first order Euler estimate
\begin{align*}
g\left( t\right)  &=g(0) +\left\{ \frac{1}{2}\left( 2\xi
y-y\right) +c^{2}f(0) g(0) -\gamma g(0)
\right\} t+o(t)  \\
&\underset{>0}{=\underbrace{\frac{1}{2}\left( 2\xi y-y\right) }t+o(
t) },
\end{align*}%
and hence~$g$ is positive (even strictly so) on some interval $\left(
0,\varepsilon _{1}\right) $. Assume this interval is maximal in the sense
that $g( \varepsilon _{1}) =0$ and $g$ is (strictly) negative on
some further interval $\left( \varepsilon _{1},\varepsilon _{2}\right) $.
Clearly then $\dot{g}( \varepsilon _{1}) \leq 0$, which
contradicts the information from the differential equation: indeed, using $%
g( \varepsilon _{1}) =0$, we obtain the contradiction%
\begin{equation*}
\dot{g}( \varepsilon _{1}) =\underset{>0}{\underbrace{\frac{1}{2}%
\left( 2\xi y-y\right) }}.
\end{equation*}%
The observation that $g\geq 0$ is useful to us, since it leads, together with 
$f\leq -C_{1,T}y$ and $\gamma \geq 0$, to the differential inequality%
\begin{align*}
\dot{g} &=\frac{1}{2}\left( 2\xi y-y\right) +c^{2}fg-\gamma g \\
&\leq \frac{1}{2}\left( 2\xi y-y\right) -\left( c^{2}C_{1,T}+\gamma \right)
g \\
&\leq \frac{1}{2}\left( 2\xi y-y\right),
\end{align*}%
and hence to the upper estimate%
\begin{equation*}
\forall 0\leq t\leq T:g( t) \leq \frac{1}{2}
\left( 2\xi_{\mathrm{max}} -1\right)
T\times y =: C_{2,T}y.
\end{equation*}%
We can feed this upper estimate on $g$ back in the differential equation
for~$f$ to obtain a lower estimate%
\begin{align*}
\dot{f} &=\frac{1}{2}\left( \xi ^{2}-y^{2}-\xi \right) +\frac{c^{2}}{2}%
\left( f^{2}-g^{2}\right) -\gamma f \\
&\geq \frac{1}{2}\left( \xi ^{2}-y^{2}-\xi \right) +\frac{c^{2}}{2}f^{2}-%
\frac{c^{2}}{2}C_{2,T}^{2}y^{2}-\gamma f \\
&=-\frac{1}{2}\left( 1+c^{2}C_{2,T}^{2}\right) y^{2}+\frac{1}{2}\left( \xi
^{2}-\xi \right) -\gamma f+\frac{c^{2}}{2}f^2 \\
&\geq -\frac{1}{2}\left( 1+c^{2}C_{2,T}^{2}\right) y^{2}+\frac{1}{2}\left(
\xi ^{2}-\xi \right) -\gamma f \\
&\geq -\left( 1+\frac{c^{2}}{2}C_{2,T}^{2}\right) y^{2}-\gamma f,
\end{align*}%
where in the last step we assume that $y$ is large enough so that the extra
amount subtracted (at least: $\frac{1}{2}y^{2}$) is larger than $\frac{1}{2}%
\left( \xi ^{2}-\xi \right) $, which remains bounded.
We also
know that $f( t) \leq -C_{1,T}y\leq 0$ for all $0\leq t\leq T$.
It follows that $-\gamma f\geq 0$ and omission leads to our final lower
bound on $\dot{f}$, namely%
\begin{equation*}
\dot{f}\geq -\left( 1+\frac{c^{2}}{2}C_{2,T}^{2}\right) y^{2}.
\end{equation*}%
This entails immediately%
\begin{equation*}
f( t) \geq -T\left( 1+\frac{c^{2}}{2}C_{2,T}^{2}\right)
y^{2} =: -C_{3,T}y^{2}.
\end{equation*}%
At last, we need a lower bound on $g$. Again, we look for a suitable
differential inequality. Since $g\geq 0$,%
\begin{align*}
\dot{g} &=\frac{1}{2}\left( 2\xi y-y\right) +c^{2}fg-\gamma g \\
&\geq \frac{1}{2}\left( 2\xi y-y\right) -c^{2}\left\vert f\right\vert
g-\gamma g \\
&\geq \frac{1}{2}\left( 2\xi -1\right) y-\left( C_{3,T}y^{2}c^{2}+\gamma
\right) g \\
&\geq \frac{1}{2}\left( 2\xi -1\right) y-2C_{3,T}y^{2}c^{2}g,
\end{align*}%
for $y$ large enough such that the additional subtraction of $%
C_{3,T}y^{2}c^{2}$ takes care of~$\gamma $. Using the upper estimate on 
$g$ (linear in $y$), and the fact that $\left( 2\xi -1\right) \geq 0$,
we conclude
\begin{equation*}
\dot{g}\geq -2C_{3,T}y^{2}c^{2}C_{2,T}y.
\end{equation*}%
It immediately follows (since $g( 0) =0$) that 
\begin{equation*}
\forall 0\leq t\leq T:g( t) \geq
-2C_{3,T}Tc^{2}C_{2,T}y^{3} =: -C_{4,T}y^{3}.
\end{equation*}
\end{proof}

\end{appendix}

\bibliographystyle{siam}
\bibliography{../gerhold}

\begin{thebibliography}{10}

\bibitem{AnPi07}
{\sc L.~B.~G. Andersen and V.~V. Piterbarg}, {\em Moment explosions in
  stochastic volatility models}, Finance Stoch., 11 (2007), pp.~29--50.

\bibitem{AvBOBuFr03}
{\sc M.~Avellaneda, D.~Boyer-Olson, J.~Busca, and P.~K. Friz}, {\em
  Reconstruction of volatility: Pricing index options using the
  steepest-descent approximation}, Risk,  (October 2002), pp.~91--95.

\bibitem{BeCo10}
{\sc A.~Bentata and R.~Cont}, {\em Forward equations for option prices in
  semimartingale models}.
\newblock Preprint, available at http://arxiv.org/abs/1001.1380, 2010.

\bibitem{BeBuFl04}
{\sc H.~Berestycki, J.~Busca, and I.~Florent}, {\em Computing the implied
  volatility in stochastic volatility models}, Comm. Pure Appl. Math., 57
  (2004), pp.~1352--1373.

\bibitem{BiGoTe87}
{\sc N.~H. Bingham, C.~M. Goldie, and J.~L. Teugels}, {\em Regular variation},
  vol.~27 of Encyclopedia of Mathematics and its Applications, Cambridge
  University Press, Cambridge, 1987.

\bibitem{CaGeMaYo04}
{\sc P.~Carr, H.~Geman, D.~P. Madan, and M.~Yor}, {\em From local volatility to
  local {L}\'evy models}, Quant. Finance, 4 (2004), pp.~581--588.

\bibitem{CaMa99}
{\sc P.~Carr and D.~P. Madan}, {\em Option valuation using the fast {F}ourier
  transform}, Journal of Computational Finance, 3 (1999), pp.~463--520.

\bibitem{CoTa04}
{\sc R.~Cont and P.~Tankov}, {\em Financial modelling with jump processes},
  Chapman \& Hall/CRC Financial Mathematics Series, Chapman \& Hall/CRC, Boca
  Raton, FL, 2004.

\bibitem{CoVo05}
{\sc R.~Cont and E.~Voltchkova}, {\em Integro-differential equations for option
  prices in exponential {L}\'evy models}, Finance Stoch., 9 (2005),
  pp.~299--325.

\bibitem{deBr58}
{\sc N.~G. de~Bruijn}, {\em Asymptotic methods in analysis}, Bibliotheca
  Mathematica. Vol. 4, North-Holland Publishing Co., Amsterdam, 1958.

\bibitem{Du94}
{\sc B.~Dupire}, {\em Pricing with a smile}, Risk, 7 (1994), pp.~18--20.

\bibitem{Du96}
\leavevmode\vrule height 2pt depth -1.6pt width 23pt, {\em A unified theory of
  volatility}.
\newblock Working paper, {P}aribas, 1996.

\bibitem{Ford60}
{\sc W.~B. Ford}, {\em Studies on divergent series and summability and the
  asymptotic developments of functions defined by {Maclaurin} series}, Chelsea
  Publishing Company, 3rd~ed., 1960.
\newblock (From two books originally published in 1916 and 1936.).

\bibitem{FrGeGuSt11}
{\sc P.~Friz, S.~Gerhold, A.~Gulisashvili, and S.~Sturm}, {\em On refined
  volatility smile expansion in the {H}eston model}.
\newblock To appear in Quantitative Finance, 2011.

\bibitem{Ga06}
{\sc J.~Gatheral}, {\em {The Volatility Surface, A Practitioner's Guide}},
  Wiley, 2006.

\bibitem{HL05}
{\sc P.~Henry-Labordere}, {\em A general asymptotic implied volatility for
  stochastic volatility models}.
\newblock Available at SSRN: http://ssrn.com/abstract=698601, 2005.

\bibitem{HL09}
{\sc P.~Henry-Labordere}, {\em Calibration of local stochastic volatility
  models to market smiles: A {M}onte-{C}arlo approach}, Risk,  (September
  2009).
\newblock Available at SSRN: http://ssrn.com/abstract=1493306.

\bibitem{He93}
{\sc S.~Heston}, {\em A closed-form solution for options with stochastic
  volatility with applications to bond and currency options}, Review of
  Financial Studies, 6 (1993), pp.~327--343.

\bibitem{KR11}
{\sc M.~Keller-Ressel}, {\em Moment explosions and long-term behavior of affine
  stochastic volatility models}, Math. Finance, 21 (2011), pp.~73--98.

\bibitem{Le04a}
{\sc R.~W. Lee}, {\em The moment formula for implied volatility at extreme
  strikes}, Math. Finance, 14 (2004), pp.~469--480.

\bibitem{MaCaCh98}
{\sc D.~Madan, P.~Carr, and E.~Chang}, {\em The variance gamma process and
  option pricing}, European Finance Review, 2 (1998), pp.~79--105.

\bibitem{Pi06}
{\sc V.~Piterbarg}, {\em Markovian projection method for volatility
  calibration}.
\newblock Available at SSRN: http://ssrn.com/abstract=906473, 2006.

\end{thebibliography}

\end{document}